
\documentclass{llncs}

\usepackage{amssymb}
\usepackage{amsfonts}
\usepackage[colorlinks,citecolor=blue,urlcolor=blue,breaklinks=true]{hyperref} 
\usepackage{tikz}
\usepackage{ifthen}
\usepackage[cmex10]{amsmath}
\usepackage{arydshln}

\usepackage{makeidx}  
%

 \mathchardef\ordinarycolon\mathcode`\:
   \mathcode`\:=\string"8000
   \begingroup \catcode`\:=\active
     \gdef:{\mathrel{\mathop\ordinarycolon}}
   \endgroup

\begin{document}
\frontmatter          
\pagestyle{headings}  
%
%
\mainmatter              
\title{Efficient Two-Stage Group Testing Algorithms for DNA Screening}
\titlerunning{Efficient Two-Stage Group Testing Algorithms for DNA Screening}  
%
\author{Michael Huber\thanks{The author gratefully acknowledges support of his work by the Deutsche Forschungsgemeinschaft (DFG) via a Heisenberg grant (Hu954/4) and a Heinz Maier-Leibnitz Prize grant (Hu954/5).}}
\authorrunning{Michael Huber}   
%
%
\institute{Wilhelm-Schickard-Institute for Computer Science\\
University of Tuebingen \\
Sand~13, 72076 Tuebingen, Germany.\\
\email{michael.huber@uni-tuebingen.de}}

\maketitle              

\begin{abstract}
Group testing algorithms are very useful tools for DNA \linebreak library screening. Building on recent work by  Levenshtein (2003) and Tonchev (2008), we construct in this paper new infinite classes of combinatorial structures, the existence of which are essential for attaining the minimum number of individual tests at the second stage of a two-stage disjunctive testing algorithm.
\end{abstract}


\section{Introduction}

With the completion of genome sequencing projects such as the Human Genome Project, efficient screening of DNA clones in very large genome sequence data-\linebreak bases has become an important issue pertaining to the study of gene functions. Very useful tools for DNA library screening are \emph{group testing algorithms}. The general group testing problem (cf.~\cite{Du99}) can be basically stated as follows: a large population $X$ of $v$ items that contains a small set of \emph{defective} (or \emph{positive}) items shall be tested in order to identify the defective items efficiently. For this, the items are pooled together for testing. The \emph{group test} reports ``yes'' if for a subset $S \subseteq X$ one or more defective items have been found, and ``no'' otherwise. Using a number of group tests, the task of determining which items are defective shall be accomplished. Various objectives could be considered for group testing, e.g., minimizing the number of group tests, limiting the number of pools or pool sizes, or tolerating a few errors. In what follows, we will focus on the first issue.

Of particular practical importance in DNA library screening are one- or two-stage group testing procedures (cf.~\cite[p.\,371]{Knill95}):

\begin{quote}\small

``[...] The technicians who implement the pooling strategies generally dislike even the 3-stage strategies that are often used. Thus the most commonly used strategies for pooling libraries of clones rely on a fixed but reasonably small set on non-singleton pools. The pools are either tested all at once or in a small number of stages (usually at most 2) where the previous stage determines which pools to test in the next stage. The potential positives are then inferred and confirmed by testing of individual clones [...].''

\end{quote}

\emph{Disjunctive} testing relies on Boolean operations. It aims to find the set of defective items by reconstructing its binary $(0,1)$-incidence vector $\mathbf{x} =(x_1,\ldots,x_v)$, where $x_i=1$ if the $i$th item is defective, and $x_i=0$ otherwise.
Levenshtein~\cite{Lev03} (cf. also~\cite{Ton08}) has employed a two-stage disjunctive testing algorithm in order to reconstruct the vector $\mathbf{x}$: At Stage~1, disjunctive tests are conducted which are determined by the rows of a binary matrix that is comparable to a parity-check matrix of a binary linear code. After determining what items are positive, negative or unresolved, individual tests are performed at Stage~2 in order to determine which of the remaining unresolved items are positive or negative.

Particularly important with respect to the research objectives in this paper, Levenshtein derived a combinatorial lower bound on the minimum number of individual tests at Stage~2. He showed that this bound is met with equality if and only if a Steiner $t$-design exists which has the additional property that the blocks have two sizes differing by one (i.e., $k$ and $k+1$). Relying on this result, Tonchev~\cite{Ton08} gave a straightforward construction method for such designs: Suppose that $\mathcal{D}=(X,\mathcal{B})$ is a Steiner $t$-$(v,k,1)$ design that contains a Steiner  $(t-1)$-$(v,k,1)$ subdesign $\mathcal{D}^\prime=(X,\mathcal{B}^\prime)$, where $\mathcal{B}^\prime \subseteq \mathcal{B}$. Then, the blocks of $\mathcal{D}^\prime$, each extended with one new point $x \notin X$, together with the blocks of $\mathcal{D}$ that do not belong to $\mathcal{D}^\prime$, form a Steiner \mbox{$t$-$(v+1,\{k,k+1\},1)$} design. Relying on specific balanced incomplete block designs (BIBDs), he constructed two infinite classes of such designs: A Steiner \mbox{$2$-$(q^e+1,\{q,q+1\},1)$} design exists for every prime power $q$ and every positive integer $e \geq 2$, and a Steiner $2$-$(6a+4,\{3,4\},1)$ design for every positive integer  $a$, based on resolvable BIBDs from affine geometries and Kirkman triple systems, respectively. 
Moreover, he constructed two infinite classes derived from certain Steiner quadruple systems.

In this paper, we build on the work by Levenshtein and Tonchev and  construct several further infinite classes of Steiner designs with the desired additional property. Our constructions involve, inter alia, resolvable BIBDs and cyclically resolvable BIBDs. As a result, we obtain efficient two-stage disjunctive group testing algorithms suited for DNA library screening. 

The paper is organized as follows: Combinatorial tools and structures which are important for our further purposes are provided in Section~\ref{tools}. Section~\ref{Known} presents a short overview of the previous combinatorial approaches and constructions by Levenshtein and Tonchev. Section~\ref{new} is devoted to our new combinatorial constructions. The paper concludes in Section~\ref{Concl}. 


\section{Combinatorial Structures and Tools}\label{tools}

Let $X$ be a set of $v$ elements and $\mathcal{B}$ a collection of \mbox{$k$-subsets} of $X$. The elements of $X$ and $\mathcal{B}$ are called \emph{points} and \emph{blocks}, respectively. An ordered pair \mbox{$\mathcal{D}=(X,\mathcal{B})$} is defined to be a \mbox{\emph{$t$-$(v,k,\lambda)$ design}} if each \mbox{$t$-subset} of $X$ is contained in exactly $\lambda$ blocks. For historical reasons, a \mbox{$t$-$(v,k,\lambda)$ design} with $\lambda =1$ is called a \emph{Steiner \mbox{$t$-design}} or a \emph{Steiner system}.  Well-known examples are \emph{Steiner triple systems} ($t=2$, $k=3$) and \emph{Steiner quadruple systems} ($t=3$, $k=4$).
A \emph{$2$-design} is commonly called a \emph{balanced incomplete block design}, and denoted by $\mbox{BIBD}(v,k,\lambda)$. 
It can be easily seen that in a \mbox{$t$-$(v,k,\lambda)$ design} each point is contained in the same number $r$ of blocks, and for the total number $b$ of blocks, the parameters of a \mbox{$t$-$(v,k,\lambda)$ design} satisfy the relations
\[bk=vr \quad \text{and} \quad r(k-1) = \lambda \frac{{v-2 \choose t-2}}{{k-2 \choose t-2}}(v-1) \quad \text{for} \quad t \geq 2.\]

\begin{example}\label{STS9}
Take as point-set \[X=\{1,2,3,4,5,6,7,8,9\}\] and as block-set
\[\mathcal{B}=\{\{1,2,3\},\{4,5,6\},\{7,8,9\},\{1,4,7\},\{2,5,8\},\{ 3,6,9\},\]
\[ \qquad \{1,5,9\},\{2,6,7\},\{3,4,8\},\{1,6,8\},\{2,4,9\},\{3,5,7\}\}.\]
This gives a $\mbox{BIBD}(9,3,1)$, i.e. the unique affine plane of order $3$. It can be constructed
as illustrated in Figure~\ref{AG23}.
\end{example}

\begin{figure}[htp]
\centering
\begin{tikzpicture}[scale=0.7,very thick]

\draw[black]

(0,0) -- (0,2)

(0,0) -- (2,0)

(2,0) -- (2,2)

 (0,2) -- (2,2)

 (0,1) -- (2,1)

(1,0) -- (1,2)

(0,0) -- (2,2)

(2,0) -- (0,2)

(0,1) -- (1,0)

(2,1) -- (1,2)

(0,1) -- (1,2)

(1,0) -- (2,1)

(2,1) .. controls (4,3) and (2,4) .. (0,2)

(1,2) .. controls (-1,4) and (-2,2) .. (0,0)

(0,1) .. controls (-2,-1) and (0,-2) .. (2,0)

(1,0) .. controls (3,-2) and (4,0) .. (2,2);

\filldraw [draw=red!100,fill=red!100]

(0,0) circle (2pt) (0,-0.3) node {1}

(0,1) circle (2pt) (-0.25,1) node {4}

(0,2) circle(2pt) (0,2.3) node {7}

(1,0) circle (2pt) (1,-0.3) node {2}

(1,1) circle (2pt) (1.3,1.12) node {5}

(1,2) circle(2pt) (1,2.3) node {8}

(2,0) circle (2pt) (2,-0.3) node {3}

(2,1) circle(2pt) (2.25,1) node {6}

(2,2) circle (2pt) (2,2.3) node {9};

\end{tikzpicture}
\caption{A $\mbox{BIBD}(9,3,1)$.}\label{AG23}
\end{figure}

In this paper, we primarily focus on BIBDs. Let $(X,\mathcal{B})$ be a $\mbox{BIBD}(v,k,\lambda)$, and let $\sigma$ be a permutation on $X$. For a block $B=\{b_1,\ldots,b_k\} \in \mathcal{B}$, define $B^\sigma :=\{b^\sigma_1,\ldots,b^\sigma_k\}$. If $\mathcal{B}^\sigma:=\{B^\sigma : B \in \mathcal{B}\}=\mathcal{B}$, then $\sigma$ is called an \emph{automorphism} of $(X,\mathcal{B})$. If there exists an automorphism $\sigma$ of order $v$, then the BIBD is called \emph{cyclic}. In this case, the point-set $X$ can be identified with $\mathbb{Z}_v$, the set of integers modulo $v$, and $\sigma$ can be represented by $\sigma : i \rightarrow i + 1$ (mod $v$).

For a block $B=\{b_1, \ldots, b_k\}$ in a cyclic $\mbox{BIBD}(v,k,\lambda)$, the set $B + i := \{b_1+i$ (mod $v), \ldots, b_k+i$ (mod $v)\}$ for $i \in \mathbb{Z}_v$ is called a \emph{translate} of $B$, and the set of all distinct translates of $B$ is called the \emph{orbit} containing $B$. If the length of an orbit is $v$, then the orbit is said to be \emph{full}, otherwise \emph{short}. A block chosen arbitrarily from an orbit is called a \emph{base block} (or \emph{starter block}). If $k$ divides $v$, then the orbit containing the block
\[B=\bigg\{0, \frac{v}{k}, 2 \frac{v}{k}, \ldots ,(k-1)\frac{v}{k}\bigg\}\]
is called a \emph{regular short orbit}. For a cyclic $\mbox{BIBD}(v,k,1)$ to exist, a necessary condition is $v \equiv 1$ or $k$ (mod $k(k-1)$). When $v \equiv 1$ (mod $k(k-1)$) all orbits are full, whereas if $v \equiv k$ (mod $k(k-1)$) one orbit is the regular short orbit and the remaining orbits are full.

A BIBD is said to be \emph{resolvable}, and denoted by $\mbox{RBIBD}(v,k,\lambda)$, if the block-set $\mathcal{B}$ can be partitioned into classes $\mathcal{R}_1,\ldots , \mathcal{R}_r$ such that every point of $X$ is contained in exactly one block of each class. The classes $\mathcal{R}_i$ are called \emph{resolution} (or \emph{parallel}) \emph{classes}. 
A simple example is as follows.

\begin{example}
An $\mbox{RBIBD}(9,3,1)$. Each row is a resolution class.

\[\begin{array}{|c|ccc|}
\hline
\mathcal{R}_1 & \{1,2,3\} & \{4,5,6\} & \{7,8,9\}\\
\mathcal{R}_2 & \{1,4,7\} & \{2,5,8\} & \{3,6,9\}\\
\mathcal{R}_3 & \{1,5,9\} & \{2,6,7\} & \{3,4,8\}\\
\mathcal{R}_4 & \{1,6,8\} & \{2,4,9\} & \{3,5,7\}\\
\hline
\end{array}\]
\end{example}

Generally, an $\mbox{RBIBD}(k^2,k,1)$ is equivalent to an affine plane of order $k$. An $\mbox{RBIBD}(v,3,1)$ is called a \emph{Kirkman triple system}. Necessary conditions for the existence of an $\mbox{RBIBD}(v,k,\lambda)$ are $\lambda(v-1) \equiv 0$ (mod $(k-1)$) and $v \equiv 0$ (mod $k$).

If $\mathcal{R}_i$ is a resolution class, define $\mathcal{R}^\sigma_i:=\{B^\sigma : B \in \mathcal{R}_i\}$. An RBIBD is called \emph{cyclically resolvable}, and denoted by $\mbox{CRBIBD}(v,k,\lambda)$, if it has a non-trivial automorphism $\sigma$ of order $v$ that preserves its resolution $\{\mathcal{R}_1,\ldots \mathcal{R}_r\}$, i.e., $\{\mathcal{R}^\sigma_1,\ldots \mathcal{R}^\sigma_r\}= \{\mathcal{R}_1,\ldots \mathcal{R}_r\}$ holds.
An example is as follows (cf.~\cite{Gen97}).

\begin{example}\label{example_CRB21}
A $\mbox{CRBIBD}(21,3,1)$ is given in Table~\ref{table_CRB21}. The base blocks are $\{1,4,16\}$, $\{19,20,3\}$, $\{1,11,19\}$, and $\{0,7,14\}$.
There are three full orbits and one regular short orbit. Each row is a resolution class. One orbit of resolution classes is $\{\mathcal{R}_0,\ldots,\mathcal{R}_6\}$, and another orbit is $\{\mathcal{R}^\prime_0, \mathcal{R}^\prime_1, \mathcal{R}^\prime_2\}$.
\end{example}

\begin{table*}[!t] 
\renewcommand{\arraystretch}{1.3}
\caption{Example of a $\mbox{CRBIBD}(21,3,1)$.}\label{table_CRB21}
\begin{center}
\begin{tabular}{|c|lll|}
\hline
$\mathcal{R}_0$ & $\{1,4,16\}$ $\{8,11,2\}$ $\{15,18,9\}$ &  $\{19,20,3\}$ $\{5,6,10\}$ $\{12,13,17\}$ & $\{0,7,14\}$\\
$\mathcal{R}_1$ & $\{2,5,17\}$ $\{9,12,3\}$ $\{16,19,10\}$ & $\{20,0,4\}$ $\{6,7,11\}$ $\{13,14,18\}$ & $\{1,8,15\}$\\
$\mathcal{R}_2$ & $\{3,6,18\}$ $\{10,13,4\}$ $\{17,20,11\}$ & $\{0,1,5\}$ $\{7,8,12\}$ $\{14,15,19\}$ & $\{2,9,16\}$\\
$\mathcal{R}_3$ & $\{4,7,19\}$ $\{11,14,5\}$ $\{18,0,12\}$ & $\{1,2,6\}$ $\{8,9,13\}$ $\{15,16,20\}$ & $\{3,10,17\}$\\
$\mathcal{R}_4$ & $\{5,8,20\}$ $\{12,15,6\}$ $\{19,1,13\}$ & $\{2,3,7\}$ $\{9,10,14\}$ $\{16,17,0\}$ & $\{4,11,18\}$\\
$\mathcal{R}_5$ & $\{6,9,0\}$ $\{13,16,7\}$ $\{20,2,14\}$ & $\{3,4,8\}$ $\{10,11,15\}$ $\{17,18,1\}$ & $\{5,12,19\}$\\
$\mathcal{R}_6$ & $\{7,10,1\}$ $\{14,17,8\}$ $\{0,3,15\}$ & $\{4,5,9\}$ $\{11,12,16\}$ $\{18,19,2\}$ & $\{6,13,20\}$\\
\hline
$\mathcal{R}^\prime_0$ & $\{1,11,9\}$ $\{4,14,12\}$ $\{7,17,15\}$ & $\{10,20,18\}$ $\{13,2,0\}$ $\{16,5,3\}$ & $\{19,8,6\}$\\
$\mathcal{R}^\prime_1$ & $\{2,12,10\}$ $\{5,15,13\}$ $\{8,18,16\}$ & $\{11,0,19\}$ $\{14,3,1\}$ $\{17,6,4\}$ & $\{20,9,7\}$\\
$\mathcal{R}^\prime_2$ & $\{3,13,11\}$ $\{6,16,14\}$ $\{9,19,17\}$ & $\{12,1,20\}$ $\{15,4,2\}$ $\{18,7,5\}$ & $\{0,10,8\}$\\
\hline
\end{tabular}
\end{center}
\end{table*}

Mishima and Jimbo~\cite{Jim97} classified $\mbox{CRBIBD}(v,k,1)$s into three types, according to their relation with cyclic quasiframes, cyclic semiframes, or cyclically resolvable group divisible designs. They can only exist when $v \equiv k$ (mod $k(k-1)$).

In a cyclic $\mbox{BIBD}(v,k,1)$, we can define a multiset $\Delta B :=\{b_i - b_j : i,j = 1, \ldots, k; i \neq j \}$ for a base block $B=\{b_1,\ldots,b_k\}$. Let $\{B_i\}_{i \in I}$, for some index set $I$, be all the base blocks of full orbits. If $v \equiv 1$ (mod $k(k-1)$), then clearly
\[\bigcup_{i \in I} \Delta B_i = \mathbb{Z}_v\setminus\{0\}.\]
The family of base blocks $\{B_i\}_{i \in I}$ is then called a \emph{(cyclic) difference family} in $\mathbb{Z}_v$, denoted by $\mbox{CDF}(v,k,1)$. 

Let $k$ be an odd positive integer, and $p \equiv 1$ (mod $k(k-1)$) a prime. A $\mbox{CDF}(p,k,1)$ is said to be \emph{radical}, and denoted by $\mbox{RDF}(p,k,1)$, if each base block is a coset of the $k$-th roots of unity in $\mathbb{Z}_p$ (cf.~\cite{Bur95radical}). 
A link to CRBIBDs has been established by Genma, Mishima and Jimbo~\cite{Gen97} as follows.

\begin{theorem}\label{thm_genma}
If there is an $\mbox{RDF}(p,k,1)$ with $p$ a prime and $k$ odd, then there exists a $\mbox{CRBIBD}(pk,k,1)$.
\end{theorem}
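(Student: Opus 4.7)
The plan is to realize the CRBIBD on the group $\mathbb{Z}_p \times \mathbb{Z}_k$, which is isomorphic to $\mathbb{Z}_{pk}$ via the Chinese Remainder Theorem (since $p \equiv 1 \pmod{k(k-1)}$ together with the primality of $p$ forces $\gcd(p,k)=1$). Let $H = \langle \omega \rangle \leq \mathbb{Z}_p^*$ be the subgroup of $k$-th roots of unity. The radical hypothesis says every RDF base block has the form $B_j = c_j H$ and is multiplicatively $H$-invariant; converting this multiplicative invariance into an additive construction on $\mathbb{Z}_p \times \mathbb{Z}_k$ is the backbone of the argument.

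For each of the $m := (p-1)/(k(k-1))$ RDF base blocks $B_j$, I would produce $k$ CRBIBD base blocks in $\mathbb{Z}_p \times \mathbb{Z}_k$: the first coordinates of each lift form an $H$-coset, while the second coordinates follow a pattern indexed by a lift parameter $s \in \mathbb{Z}_k$. Together with the regular short orbit base block $\{0\} \times \mathbb{Z}_k$, this yields $mk + 1 = (p-1)/(k-1) + 1$ base blocks, the right total for a CRBIBD$(pk,k,1)$. The BIBD axiom is then verified by a difference count: the first-coordinate differences inherit the RDF differences, covering $\mathbb{Z}_p^* \times \mathbb{Z}_k$; the second-coordinate patterns of the $k$ lifts, together with the short orbit block, exactly fill in the remaining differences along $\{0\} \times \mathbb{Z}_k^*$ and the second-coordinate components of $\mathbb{Z}_p^* \times \mathbb{Z}_k^*$.

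The resolution structure is then obtained by grouping blocks into classes of size $p$: a typical class consists of one lifted representative per $H$-coset in $\mathbb{Z}_p^*$, together with all its $k$ translates by the subgroup $\{0\} \times \mathbb{Z}_k$, plus a translate of the short orbit block. The radical invariance $\omega \cdot c_j H = c_j H$ is precisely what makes the $k$ translates of a given lift disjoint with union equal to $(c_j H) \times \mathbb{Z}_k$, so the class partitions $\mathbb{Z}_p \times \mathbb{Z}_k$. Finally, the shift $x \mapsto x+1$ on $\mathbb{Z}_{pk}$ permutes the resolution classes cyclically, yielding cyclic resolvability. The main obstacle is pinning down the second-coordinate lifting patterns so that the $k$ lifts of each RDF base block produce pairwise distinct full orbits, their differences cover the missing second-coordinate slots exactly once, and they mesh with the resolution structure; the oddness of $k$ (ensuring $-1 \notin H$) enters here to guarantee various bijectivity properties for the auxiliary maps $\mathbb{Z}_k \to \mathbb{Z}_k$ that arise in the second-coordinate bookkeeping.
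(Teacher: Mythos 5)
The paper itself gives no proof of this statement: it is quoted directly from Genma, Mishima and Jimbo \cite{Gen97}, so there is no internal argument to compare against. Judged on its own merits, your proposal is a reasonable plan for reconstructing their construction, but it is not yet a proof, for two reasons. First, the entire constructive content is deferred: you explicitly flag ``pinning down the second-coordinate lifting patterns'' as ``the main obstacle,'' and that is precisely where the theorem lives. Without an explicit definition of the $k$ lifts of each radical base block, none of the claims can be checked --- the difference count over $\mathbb{Z}_p^*\times\mathbb{Z}_k^*$, the fact that the $k$ lifts generate distinct full orbits, and the compatibility with the resolution all depend delicately on that choice, and naive choices (e.g.\ second coordinates of the form $s\cdot i$) fail to cover all residues when $k$ is composite. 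The existence of a workable pattern is exactly what uses the radical ($H$-coset) structure, so leaving it open leaves the theorem unproved.

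Second, your description of the resolution undercounts the parallel classes. A $\mbox{CRBIBD}(pk,k,1)$ has $r=(pk-1)/(k-1)=p+(p-1)/(k-1)$ resolution classes, but there are only $p$ translates of the regular short-orbit block, so only $p$ classes can have your ``typical'' form. The remaining $(p-1)/(k-1)$ classes consist entirely of full-orbit blocks and are organized by the \emph{other} translation subgroup: in the paper's own Example~\ref{example_CRB21}, these are $\mathcal{R}'_0,\mathcal{R}'_1,\mathcal{R}'_2$, each the orbit of a single block under translation by multiples of $k$, and the property that makes such a class a partition is that the block's second coordinates form a transversal of $\mathbb{Z}_k$ --- a condition on the lifts that is independent of the $H$-coset condition on first coordinates and further constrains the second-coordinate patterns you left unspecified. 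These short-orbit parallel classes are not addressed in your sketch. Relatedly, your $p$ ``typical'' classes cannot all simultaneously have the one-lift-per-$H$-coset shape and be cyclic translates of one another, since $c_jH+1$ is not an $H$-coset; one must instead fix a single canonical class, take its $p$ translates, verify its stabilizer is exactly $\langle p\rangle$, and then check that these classes together with the short-orbit classes use every block exactly once. None of these verifications is routine enough to omit.
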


The notion of resolvability holds in the same way for $t$-$(v,k,\lambda)$ designs with $t\geq2$. Moreover, a Steiner quadruple system $3$-$(v,4,1)$ is called \emph{2-resolvable} if its block-set can be partitioned into disjoint Steiner $2$-$(v,4,1)$ designs. 

For encyclopedic references on combinatorial designs, we refer the reader to~\cite{BJL1999,crc06}.
A comprehensive book on RBIBDs and related designs is~\cite{Fur96}. Highly regular designs are treated in the monograph~\cite{Hu2008}. A recent survey on various connections between error-correcting codes and algebraic combinatorics is given in~\cite{Hu2009}. For an overview of numerous applications of combinatorial designs in computer and communication sciences, see, e. g.,~\cite{Col89,Col99,Hu2010}.


\section{Combinatorial Approaches and Constructions by Levenshtein and Tonchev}\label{Known}

This section presents a short overview of recent combinatorial approaches and constructions by Levenshtein~\cite{Lev03}  and Tonchev~\cite{Ton08}. 

\subsection{Two-Stage Disjunctive Group Testing Algorithm}

\emph{Disjunctive} group testing relies on Boolean operations in order to solve the problem of reconstructing an unknown binary vector $\mathbf{x}$ of length $v$  using the pool testing procedure~\cite{Du99}. Particularly, Levenshtein~\cite{Lev03} (see also~\cite{Ton08}) has employed a two-stage disjunctive testing algorithm to reconstruct the vector $\mathbf{x}=(x_1,\ldots,x_v)$: At Stage~1, disjunctive tests are conducted which are determined by the rows of a binary $u \times v$ matrix $H=(h_{i,j})$ that is comparable to a parity-check matrix of a binary linear code. A \emph{syndrome} $\mathbf{s} =(s_1,\ldots,s_u)$ is calculated, where $s_i$ is defined by
\[s_i=\bigvee_{j=1}^v x_j \, \& \, h_{i,j}, \quad i=1,\ldots,u,\]
where $\vee$ and $\&$ denote the logical operations of disjunction and conjunction.
The system of $u$ logical equations with $v$ Boolean variables for reconstructing the vector $\mathbf{x}=(x_1,\ldots,x_v)$ does not have a unique solution in general. After determining what items are positive, negative or unresolved, individual tests are performed at Stage~2 in order to determine which of the remaining unresolved items are positive or negative.

\subsection{Minimum Number of Individual Tests}

Let $X(v)$ be the set of all $2^v$ subsets of the set $X=\{1, 2, \ldots,  v\}$ and $X_t(v)=\{x \in X(v): \left| x \right|=t\}$.
For a fixed $t$ ($1 \leq t \leq v$) consider a covering operator $F : X_t(v) \rightarrow X(v)$ such that $x \subseteq F(x)$ for any
$x \in X_t(v)$. Define  
\[\mathcal{D} = \{F(x): x \in  X_t(v)\}.\]
For any $T$, $1\leq T \leq {v \choose t}$, consider the decreasing continuous
function $g_t(T) = k + \frac{k + 1}{t}(1- \alpha)$  where $k$ and $\alpha$ are uniquely determined by the conditions
$T{k \choose t}= \alpha {v \choose t}$, $k\in \{t,\ldots,v\}$, and $1- \frac{t}{k+1} < \alpha \leq 1$.
Using averaging and linear programing,  Levenshtein~\cite{Lev03} proved the following inequality:

\begin{theorem}[Levenshtein, 2003]\label{Lev}

\[\frac{1}{{v \choose t}} \sum_{x \in X_t(v)} \left| F(x) \right| \geq g_t(\left| \mathcal{D} \right|),\]
and  the bound is met with equality if and only if $\mathcal{D}$ is a Steiner \mbox{$t$-$(v,\{k,k+1\},1)$} design.
\end{theorem}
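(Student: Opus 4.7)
The plan is to recast the average-size inequality as a linear programming problem on block multiplicities, whose optimum can be computed explicitly and shown to match $g_t(T)$. For each block $B \in \mathcal{D}$, set $r_B := |\{x \in X_t(v) : F(x) = B\}|$. Since $F$ is a function on $X_t(v)$, we have $\sum_{B \in \mathcal{D}} r_B = \binom{v}{t}$, and since $F(x) \supseteq x$ with $|x| = t$, every preimage of $B$ is a $t$-subset of $B$, giving $r_B \leq \binom{|B|}{t}$. Double counting then yields
\[\sum_{x \in X_t(v)} |F(x)| = \sum_{B \in \mathcal{D}} r_B\, |B|,\]
so that the theorem reduces to showing $\sum_B r_B |B| \geq g_t(T)\binom{v}{t}$ under the above constraints together with $|\mathcal{D}| = T$.

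Next I would solve the relaxed optimization in two stages. For any fixed multiset of block sizes $\{m_B\}$, the marginal cost per unit of $r_B$ equals $m_B$, so the LP optimum fully loads the blocks of smallest size first, i.e.\ sets $r_B = \binom{m_B}{t}$ beginning with the smallest sizes until the budget $\binom{v}{t}$ is exhausted. The outer optimization over sizes is then handled by an exchange argument exploiting the convexity of the sequence $j \mapsto j\binom{j}{t}$, whose second difference $2\binom{j-1}{t-1} + (j+1)\binom{j-1}{t-2}$ is nonnegative: replacing any pair of sizes $(m,m')$ with $m' \geq m+2$ by $(m+1, m'-1)$ weakly decreases the fully-loaded objective, and the parallel convexity of $j \mapsto \binom{j}{t}$ handles the capacity bookkeeping. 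Iterating reduces to a multiset with sizes from two consecutive integers, and the unique pair compatible with $|\mathcal{D}| = T$ and $\sum r_B = \binom{v}{t}$ is $\{k, k+1\}$, where $k$ is fixed by $T\binom{k}{t} \leq \binom{v}{t} < T\binom{k+1}{t}$, i.e.\ by $1 - t/(k+1) < \alpha \leq 1$. Writing $T_k, T_{k+1}$ for the numbers of blocks of each size, the identities $T_k + T_{k+1} = T$ and $T_k\binom{k}{t} + T_{k+1}\binom{k+1}{t} = \binom{v}{t}$, combined with the elementary relation $\binom{k+1}{t}/\binom{k}{t-1} = (k+1)/t$, yield after a short calculation
\[\frac{1}{\binom{v}{t}}\sum_{B \in \mathcal{D}} r_B\, |B| = k + \frac{k+1}{t}(1 - \alpha) = g_t(T).\]

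For the equality characterization, equality in the LP forces both the size condition $|B| \in \{k, k+1\}$ and the full-loading condition $r_B = \binom{|B|}{t}$ for every $B \in \mathcal{D}$. Full loading means $F(x) = B$ for every $t$-subset $x \subseteq B$; combined with $F$ being a function defined on all of $X_t(v)$, every $t$-subset of $X$ then lies in exactly one block of $\mathcal{D}$. Together with the two-size condition this is precisely the statement that $\mathcal{D}$ is a Steiner \mbox{$t$-$(v, \{k, k+1\}, 1)$} design, and conversely any such design together with the map $F$ sending each $x$ to the unique block containing it saturates both inequalities. The main technical hurdle is making the exchange step in the second paragraph fully rigorous: one must check that each prescribed size swap leaves a feasible LP solution and that cost and capacity can be tracked simultaneously, so that repeated application indeed converges to the two-size optimum.
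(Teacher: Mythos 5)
First, note that the paper itself does not prove this theorem: it is quoted from Levenshtein's 2003 paper with only the remark that it follows ``using averaging and linear programming,'' so there is no in-paper proof to compare against. Your setup is correct and surely in the spirit of Levenshtein's argument: introducing $r_B=|F^{-1}(B)|$, the double count $\sum_x|F(x)|=\sum_B r_B|B|$, the constraints $\sum_B r_B=\binom{v}{t}$ and $r_B\le\binom{|B|}{t}$, the identification of $k$ via $T\binom{k}{t}\le\binom{v}{t}<T\binom{k+1}{t}$, the computation showing that the two-size, exact-capacity profile evaluates to $g_t(T)$, and the equality analysis (full loading of every block plus $\sum_B\binom{|B|}{t}=\binom{v}{t}$ forces every $t$-set to lie in exactly one block) are all sound.

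The gap is in the exchange step, and it is not merely a matter of bookkeeping: the move you propose optimizes the wrong functional. The quantity to be minimized over size profiles is the greedy/LP value $V(m_1,\dots,m_T)=\min\{\sum_i r_im_i\}$, not the fully-loaded cost $\sum_i m_i\binom{m_i}{t}$; these agree only when $\sum_i\binom{m_i}{t}$ equals $\binom{v}{t}$ exactly. Concretely, take $t=1$, $\binom{v}{t}=10$, $T=3$ (so $k=3$, $g_1(3)\cdot 10=34$). The profile $(5,5,8)$ has greedy value $50$ (load the two $5$-blocks); the exchange $(5,8)\mapsto(6,7)$ yields $(5,6,7)$ with greedy value $5\cdot5+5\cdot6=55$. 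Your tracked quantity $\sum_i m_i\binom{m_i}{1}$ does drop ($114\to110$), but the actual objective \emph{increases}, so the iteration is not monotone in the relevant sense and need not terminate at the optimum. Moreover, since $j\mapsto\binom{j}{t}$ is strictly convex for $t\ge2$, the exchange strictly decreases total capacity $\sum_i\binom{m_i}{t}$ and can push it below $\binom{v}{t}$, destroying feasibility; and even when the process does terminate with sizes in $\{k,k+1\}$, the terminal profile is pinned down by the (conserved) value of $\sum_i m_i$, not by your two identities, so it need not satisfy $T_k\binom{k}{t}+T_{k+1}\binom{k+1}{t}=\binom{v}{t}$ (e.g.\ $(3,4,4)$ above, value $37\ne34$). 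A clean repair is to abandon the exchange and prove the bound by LP duality directly: with multipliers $\mu=k+\tfrac{k+1}{t}$ and $\nu=-\tfrac{k+1}{t}\binom{k}{t}$ one has $\sum_B r_B|B|\ge\mu\binom{v}{t}+\nu T=g_t(T)\binom{v}{t}$, and dual feasibility reduces to the single-variable inequality $(\mu-j)\binom{j}{t}\le\tfrac{k+1}{t}\binom{k}{t}$ for all $j\ge t$, with equality exactly at $j=k$ and $j=k+1$ --- which also delivers the equality characterization you want.
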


One of the main motivations for the above result is to minimize the number of individual tests at the second stage of a two-stage disjunctive group testing algorithm under the condition that the vectors $\mathbf{x}$ are distributed with probabilities $p^{\left| x \right|} (1-p)^{v-\left| x \right|}$ where $x \in X(v)$  denotes the indices of the ones (defective items) in $\mathbf{x}$. The bound above implies that the expected number of items which remain unresolved after application in parallel of $u$ pools is not less than
\[v \sum_{t=1}^{v} {{v \choose t}} p^t (1-p)^{v-t}2^{-\frac{u}{t}}-vp.\]


\subsection{Known Infinite Classes of Combinatorial Constructions}

Tonchev~\cite{Ton08} straightforwardly gave a non-trivial construction method to obtain Steiner designs which have the additional property that the blocks have two sizes differing by one. 

\begin{proposition}[Tonchev, 2008]\label{Ton}
Suppose that $\mathcal{D}=(X,\mathcal{B})$ is a Steiner \linebreak \mbox{$t$-$(v,k,1)$} design that contains a Steiner  $(t-1)$-$(v,k,1)$ subdesign $\mathcal{D}^\prime=(X,\mathcal{B}^\prime)$, where $\mathcal{B}^\prime \subseteq \mathcal{B}$. Then, the blocks of $\mathcal{D}^\prime$, each extended with one new point $x \notin X$, together with the blocks of $\mathcal{D}$ that do not belong to $\mathcal{D}^\prime$, form a Steiner \mbox{$t$-$(v+1,\{k,k+1\},1)$} design. In particular, if there exists an $\mbox{RBIBD}(v,k,1)$, then there exists a Steiner \mbox{$2$-$(v+1,\{k,k+1\},1)$} design.
\end{proposition}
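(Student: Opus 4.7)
The plan is to verify the Steiner $t$-design property of the proposed block collection directly, splitting the verification into two cases according to whether a $t$-subset of the extended point set $X \cup \{x\}$ avoids or contains the new point $x$. Since the argument is essentially bookkeeping, I do not anticipate a serious obstacle; the one subtlety to be careful about is keeping the two Steiner properties, one from $\mathcal{D}$ and one from $\mathcal{D}^\prime$, aligned with the correct cases of $t$-subsets after extension.

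First I would fix notation: let $\mathcal{B}^{*}$ denote the collection $\{B \cup \{x\} : B \in \mathcal{B}^\prime\} \cup (\mathcal{B} \setminus \mathcal{B}^\prime)$. By construction its blocks have sizes $k+1$ and $k$, matching the claimed parameters, so the task reduces to showing that every $t$-subset of $X \cup \{x\}$ is contained in exactly one element of $\mathcal{B}^{*}$.

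For a $t$-subset $T \subseteq X$ that avoids $x$, the Steiner $t$-design property of $\mathcal{D}$ yields a unique $B \in \mathcal{B}$ with $T \subseteq B$; either $B \in \mathcal{B}^\prime$, in which case the extended block $B \cup \{x\}$ covers $T$, or $B \in \mathcal{B} \setminus \mathcal{B}^\prime$, in which case $B$ itself lies in $\mathcal{B}^{*}$ and covers $T$. Uniqueness within $\mathcal{B}^{*}$ is automatic because the map sending each block of $\mathcal{B}^{*}$ to its intersection with $X$ is a bijection onto $\mathcal{B}$. For a $t$-subset of the form $\{x\} \cup S$ with $S \subseteq X$ and $|S| = t - 1$, only the extended blocks contain $x$, and an extended block $B \cup \{x\}$ covers $\{x\} \cup S$ if and only if $S \subseteq B$; the Steiner $(t-1)$-design property of $\mathcal{D}^\prime$ then supplies exactly one such $B$.

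For the ``in particular'' clause I would observe that a Steiner $1$-$(v,k,1)$ design on $X$ is nothing other than a partition of $X$ into $k$-subsets, which is precisely a resolution class of an $\mbox{RBIBD}(v,k,1)$. Hence any single parallel class of a given $\mbox{RBIBD}(v,k,1)$ furnishes the required Steiner $(t-1)$-subdesign in the case $t = 2$, and applying the first part of the proposition yields the desired Steiner $2$-$(v+1,\{k,k+1\},1)$ design.
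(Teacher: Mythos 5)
Your proof is correct; the paper states this proposition without an accompanying proof (it is cited from Tonchev, 2008), and your direct case analysis---$t$-subsets avoiding $x$ handled by the Steiner property of $\mathcal{D}$ together with the bijection $\mathcal{B}^{*}\to\mathcal{B}$ given by intersecting with $X$, and $t$-subsets containing $x$ handled by the Steiner $(t-1)$-property of $\mathcal{D}^\prime$---is exactly the standard verification underlying Tonchev's construction. Your observation that a single parallel class of an $\mbox{RBIBD}(v,k,1)$ is a Steiner $1$-$(v,k,1)$ subdesign correctly disposes of the ``in particular'' clause as well.
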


Relying on resolvable BIBDs from affine geometries and Kirkman triple systems, Tonchev derived from the above result the following infinite classes:

\begin{theorem}[Tonchev, 2008]\label{Ton2}
There exists

\begin{enumerate}

\item[$\bullet$] a Steiner \mbox{$2$-$(q^{e}+1,\{q,q+1\},1)$} design for any prime power $q$ and any positive integer $e \geq 2$,

\item[$\bullet$] a Steiner \mbox{$2$-$(6a+4,\{3,4\},1)$} design for any positive integer $a$.

\end{enumerate}
 
\end{theorem}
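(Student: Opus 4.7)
The plan is to apply Proposition~\ref{Ton} twice, once to a classical family of resolvable affine-geometry designs and once to Kirkman triple systems. Since the proposition already guarantees a Steiner $2$-$(v+1,\{k,k+1\},1)$ design whenever an $\mbox{RBIBD}(v,k,1)$ exists, both assertions reduce to producing resolvable BIBDs with the appropriate parameters.

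For the first family, I would take the affine geometry $\mbox{AG}(e,q)$ over the finite field $\mathbb{F}_q$, whose point-set has cardinality $q^e$ and whose lines (the $1$-flats) are $q$-subsets covering every pair of points exactly once. This gives a Steiner $2$-$(q^e,q,1)$ design, and its lines partition naturally into parallel classes — the cosets of each $1$-dimensional subspace of $\mathbb{F}_q^e$ form one resolution class — so the design is an $\mbox{RBIBD}(q^e,q,1)$ for every prime power $q$ and every integer $e\geq2$. Applying Proposition~\ref{Ton} with $v=q^e$, $k=q$ yields the first assertion.

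For the second family, I would invoke the classical Ray-Chaudhuri--Wilson theorem on Kirkman triple systems, which asserts that an $\mbox{RBIBD}(v,3,1)$ exists if and only if $v\equiv 3\pmod 6$. Setting $v=6a+3$ for a positive integer $a$ gives an $\mbox{RBIBD}(6a+3,3,1)$, and Proposition~\ref{Ton} then yields a Steiner $2$-$(6a+4,\{3,4\},1)$ design.

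Both parts are essentially immediate once Proposition~\ref{Ton} is in hand; there is no real obstacle. The only nontrivial ingredients are the two existence results for resolvable designs — the affine-geometry construction (which is a direct verification using cosets of $1$-dimensional subspaces) and the much deeper Ray-Chaudhuri--Wilson theorem, which is invoked as a black box. If I wanted to be fully self-contained I would have to reproduce the Kirkman triple system construction, which is the main technical point I would otherwise gloss over.
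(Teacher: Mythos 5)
Your proposal is correct and matches the paper's (and Tonchev's) derivation exactly: the paper explicitly attributes these two families to applying Proposition~\ref{Ton} to resolvable BIBDs from affine geometries and to Kirkman triple systems, which is precisely what you do. The two existence inputs you cite — the resolvability of the line design of $\mbox{AG}(e,q)$ and the Ray-Chaudhuri--Wilson theorem for $v\equiv 3\pmod 6$ — are the standard ones and are used here in the same black-box fashion.
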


Based on results on 2-resolvable Steiner quadruple systems by Baker~\cite{Bak76} \& Semakov et al.~\cite{Sem73} and by Teirlinck~\cite{Teir94}, Tonchev obtained this way also two infinite classes for $t > 2$. The third class had already been constructed earlier by Tonchev~\cite{Ton96}. 

\begin{theorem}[Tonchev, 1996 \& 2008]\label{Ton3}
There is

\begin{enumerate}

\item[$\bullet$] a Steiner \mbox{$3$-$(2^{2f}+1,\{4,5\},1)$} design for any positive integer $f \geq 2$,

\item[$\bullet$] a Steiner \mbox{$3$-$(2 \cdot 7^e+3,\{4,5\},1)$} design for any positive integer $e$,

\item[$\bullet$] a Steiner \mbox{$4$-$(4^{e}+1,\{5,6\},1)$} design for any positive integer $e \geq 2$.

\end{enumerate}

\end{theorem}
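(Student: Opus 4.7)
My plan is to apply Proposition~\ref{Ton} with $t=3$, $k=4$ to obtain the first two classes, and to fall back on a direct construction for the third. The key observation making the first two classes accessible is that a \emph{2-resolvable} SQS, whose block-set partitions into $2$-$(v,4,1)$ designs, furnishes exactly the hypothesis of Proposition~\ref{Ton}: any one of the parallel $2$-designs can be taken as the subdesign $\mathcal{D}^\prime$, and extending each of its blocks by a fresh point $\infty \notin X$ (while retaining the remaining blocks unchanged) produces a Steiner $3$-$(v+1,\{4,5\},1)$ design. So the whole task for the first two classes reduces to citing the right existence results for 2-resolvable SQS of the relevant orders.

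For the first class, I set $v = 2^{2f}$. The 2-resolvable SQS of this order for $f \ge 2$ are provided by Baker~\cite{Bak76} together with Semakov et al.~\cite{Sem73}, and plugging them into Proposition~\ref{Ton} yields the Steiner $3$-$(2^{2f}+1,\{4,5\},1)$ design. For the second class, the parameters are $v = 2\cdot 7^e + 2$, and the required 2-resolvable SQS for every positive integer $e$ is supplied by Teirlinck~\cite{Teir94}; one more invocation of Proposition~\ref{Ton} closes the case.

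The third class is the main obstacle, since the subdesign-extension recipe of Proposition~\ref{Ton} cannot be used: a Steiner $4$-$(4^e+1,\{5,6\},1)$ design would require a Steiner $4$-$(4^e,5,1)$ design (with a $3$-$(4^e,5,1)$ subdesign), but the necessary integrality condition at level $i=t-1$ demands $(v-3)/2 \in \mathbb{Z}$, which fails for every $v = 4^e$. The extension paradigm is therefore unavailable, and I would appeal instead to the earlier direct construction of Tonchev~\cite{Ton96}, which produces these designs by independent algebraic/geometric means (for instance, by exploiting the structure of $\mathrm{GF}(4^e)$ together with an adjoined point at infinity) rather than by extending a subdesign. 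Assembling the three ingredients completes the statement.
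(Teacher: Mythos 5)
Your proposal matches the paper's treatment exactly: the first two classes are obtained by feeding the 2-resolvable Steiner quadruple systems of Baker~\cite{Bak76}/Semakov et al.~\cite{Sem73} (order $2^{2f}$) and Teirlinck~\cite{Teir94} (order $2\cdot 7^e+2$) into Proposition~\ref{Ton}, and the third class is simply imported from Tonchev's earlier direct construction~\cite{Ton96}. The only cosmetic difference is that the construction in~\cite{Ton96} proceeds via Preparata codes rather than the ``$\mathrm{GF}(4^e)$ plus a point at infinity'' mechanism you speculate about, but since you correctly observe that the extension method cannot apply (no Steiner $4$-$(4^e,5,1)$ design exists, as $(4^e-3)/2$ is not an integer) and you cite the right source, this is immaterial.
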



\section{New Infinite Classes of Combinatorial Constructions}\label{new}

We present several constructions of new infinite families of Steiner designs having the desired additional property that the blocks have two sizes differing by one. Our constructions involve, inter alia, resolvable BIBDs and cyclically resolvable BIBDs. As a result, we obtain efficient two-stage disjunctive group testing algorithms suited for DNA library screening. 


\subsection{CRBIBD-Constructions}

We obtain the following result:

\begin{theorem}\label{thm_CRBIBD-Constructions}
Let $p$ be a prime. Then there exists a Steiner \mbox{$2$-$(pk+1,\{k,k+1\},1)$} design for the following cases:

\begin{enumerate}

\item[\em(1)] $(k,p)= (3,6a+1)$ for any positive integer $a$,

\item[\em(2)] $(k,p)= (4,12a+1)$ for any odd positive integer $a$,

\item[\em(3a)] $(k,p)= (5,20a+1)$ for any positive integer $a$ such that $p<10^3$, and furthermore

\item[\em(3b)] $(k,p)= (5,20a+1)$ for any positive integer $a$ satisfying the condition stated in~(ii) in the proof,

\item[\em(4)] $(k,p)= (7,42a+1)$ for any positive integer $a$ satisfying the condition stated in~(iii) in the proof,

\item[\em(5)] $(k,p)= (9,p)$ for the values of $p \equiv 1$ $($\emph{mod} $72)  < 10^4$ given in Table~\ref{table_RDF}.

\end{enumerate}

Moreover, there exists a  Steiner \mbox{$2$-$(qk+1,\{k,k+1\},1)$} design for the following cases:

\begin{enumerate}

\item[\em(6)] $(k,q)$ for $k=3,5,7$, or $9$, and $q$ is a product of primes of the form $p \equiv 1$ $($\emph{mod} $k(k-1))$ as in the cases above,

\item[\em(7)] $(k,q)= (4,q)$ and $q$ is a product of primes of the form $p=12a +1$ with $a$ odd.

\end{enumerate}

\end{theorem}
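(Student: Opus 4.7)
The plan is to construct, in each case, a resolvable BIBD with block size $k$ on $pk$ (or $qk$) points and then invoke Proposition~\ref{Ton} to obtain the claimed Steiner $2$-$(pk+1,\{k,k+1\},1)$ design. Since every CRBIBD is in particular an RBIBD, the existence of a $\mbox{CRBIBD}(pk,k,1)$ will suffice. For $k$ odd, Theorem~\ref{thm_genma} further reduces the task to exhibiting a radical difference family $\mbox{RDF}(p,k,1)$, so the bulk of the work becomes a question about cyclotomic base blocks in $\mathbb{Z}_p$.

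For cases (1), (3a), (3b), (4), and (5), I would appeal to known existence results for radical difference families. When $k=3$ (case (1)), every prime $p \equiv 1 \pmod 6$ admits an $\mbox{RDF}(p,3,1)$ by a classical construction (essentially Netto's), giving the entire infinite family at once. For $k\in\{5,7\}$ (cases (3) and (4)) existence of $\mbox{RDF}(p,k,1)$ is governed by a cyclotomic condition that holds unconditionally for all small primes (verified computationally, yielding (3a)) and, in the tail, whenever a Buratti/Wilson type character-sum inequality is satisfied --- this is the condition to be recorded explicitly as (ii), respectively (iii), in the proof. For $k=9$ (case (5)) there is no clean analytic criterion available, so I would rely on a finite computer enumeration of cosets of the ninth roots of unity in $\mathbb{Z}_p$ for each $p \equiv 1 \pmod{72}$ below $10^4$, exhibiting the base blocks in Table~\ref{table_RDF}. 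In every instance, Theorem~\ref{thm_genma} then produces the required $\mbox{CRBIBD}(pk,k,1)$, and Proposition~\ref{Ton} finishes the argument.

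Case~(2) is the one for which Theorem~\ref{thm_genma} is not directly available, since $k=4$ is even. Here I would instead construct a resolvable cyclic BIBD with block size $4$ on $4p$ points directly, using a coset of the twelfth roots of unity in $\mathbb{Z}_p$ as one base block and the orbit of a short (regular) block of length $p$ to fill out the parallel classes; the parity restriction ``$a$ odd'' is precisely what guarantees that these orbits can be partitioned into resolution classes. This is the step I expect to require the most care, since it is the only one that lies outside the clean RDF framework and its resolvability must be argued by hand.

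Finally, for the composite cases (6) and (7), I would use a standard product construction for (radical) difference families: if $\mbox{RDF}(p_i,k,1)$ exists for each prime $p_i$ dividing $q$, then combining the base blocks through the Chinese Remainder Theorem with the $k$-th roots of unity in $\mathbb{Z}_q$ produces a $\mbox{CDF}(q,k,1)$ whose orbits assemble into a $\mbox{CRBIBD}(qk,k,1)$; Proposition~\ref{Ton} then yields the desired design. The main obstacle I anticipate overall is twofold: verifying that cyclic resolvability is preserved by the composition (the full-orbit structure must mesh consistently across all prime factors) and, in case~(7), arranging the parity hypothesis ``$a$ odd'' on every factor simultaneously so that the direct construction for $k=4$ used in case~(2) survives the product.
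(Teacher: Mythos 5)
Your overall strategy coincides with the paper's: build a $\mbox{CRBIBD}(pk,k,1)$ (which is in particular an $\mbox{RBIBD}(pk,k,1)$), then apply Proposition~\ref{Ton}; for odd $k$ reduce via Theorem~\ref{thm_genma} to the existence of a radical difference family $\mbox{RDF}(p,k,1)$, quoting Buratti's criteria for $k=3,5,7$ and the finite table for $k=9$. That is exactly the paper's route for cases (1), (3b), (4) and (5).

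However, your treatment of case (3a) contains a genuine error. You claim the cyclotomic condition for an $\mbox{RDF}(p,5,1)$ ``holds unconditionally for all small primes,'' but it does not: Table~\ref{table_RDF} lists only thirteen primes $p<10^3$ admitting an $\mbox{RDF}(p,5,1)$, which is a proper subset of the primes $p\equiv 1 \pmod{20}$ below $10^3$ (e.g.\ $101$, $181$, $521$ are missing). So the RDF route cannot deliver (3a) as stated. The paper instead invokes a separate, stronger result of Buratti on \emph{resolvable} (not radical) difference families, which yields a $\mbox{CRBIBD}(5p,5,1)$ for \emph{every} prime $p\equiv 1 \pmod{20}$ below $10^3$; without that citation your argument only proves the subfamily in (3b). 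Two further points where you substitute sketches for what the paper handles by citation: for case (2) the construction of a $\mbox{CRBIBD}(4p,4,1)$ for $p=12a+1$ with $a$ odd is the content of an entire paper of Lam and Miao, and your one-sentence outline (twelfth roots of unity plus the regular short orbit) is far from establishing cyclic resolvability; and for cases (6) and (7) the paper relies on an existing recursive construction of Genma, Mishima and Jimbo giving a $\mbox{CRBIBD}(kq,k,1)$ for $q$ a product of admissible primes, whereas your CRT product of difference families would still need the resolvability compatibility you yourself flag as unresolved.
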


\begin{proof}
The constructions are based on the existence of a $\mbox{CRBIBD}(pk,k,1)$ in conjunction with Proposition~\ref{Ton}.
We first assume that $k$ is odd. Then the following infinite ((i)-(iii)) and finite ((iv)) families of radical difference families exist (cf.~\cite{Bur95radical} and the references therein;~\cite{crc06}):

\begin{enumerate}

\item[(i)] An $\mbox{RDF}(p,3,1)$ exists for all primes $p \equiv 1$ (mod $6$).

\item[(ii)] Let $p=20a+1$ be a prime, let $2^e$ be the largest power of $2$ dividing $a$ and let $\varepsilon$ be a $5$-th primitive root of unity in $\mathbb{Z}_p$. Then an $\mbox{RDF}(p,5,1)$ exists if and only if $\varepsilon +1$ is not a $2^{e+1}$-th power in $\mathbb{Z}_p$, or equivalently $(11+5\sqrt{5})/2$ is not a $2^{e+1}$-th power in $\mathbb{Z}_p$.

\item[(iii)] Let $p=42a+1$ be a prime and let $\varepsilon$ be a $7$-th primitive root of unity in $\mathbb{Z}_p$. Then an $\mbox{RDF}(p,7,1)$ exists if and only if there exists an integer $f$ such that $3^f$ divides $a$ and $\varepsilon +1, \varepsilon^2 + \varepsilon +1,\frac{\varepsilon^2 + \varepsilon +1}{\varepsilon +1}$ are $3^{f}$-th powers but not $3^{f+1}$-th powers in $\mathbb{Z}_p$.

\item[(iv)] An $\mbox{RDF}(p,9,1)$ exists for all primes $p<10^4$ displayed in Table~\ref{table_RDF}.

\end{enumerate}

Theorem~\ref{thm_genma} yields the respective $\mbox{CRBIBD}(pk,k,1)$s. Moreover, in~\cite{Gen97} a recursive construction is given that implies the existence of a $\mbox{CRBIBD}(kq,k,1)$ whenever $q$ is a product of primes of the form $p \equiv 1$ (mod $k(k-1)$).
In addition, a $\mbox{CRBIBD}(5p,5,1)$ has been shown~\cite{Bur97resolvable} to exist for any prime $p \equiv 1$ (mod $20$) $< 10^3$. 

We now consider the case when $k$ is even:
In~\cite{Lam99}, a $\mbox{CRBIBD}(4p,4,1)$ is constructed for any prime $p=20a+1$, where $a$ is an odd positive integer.
Furthermore, via the above recursive construction, a $\mbox{CRBIBD}(4q,4,1)$ exists whenever $q$ is a product of primes of the form $p=12a +1$ and $a$ is odd. The result follows.\qed
\end{proof}

\begin{table}[!t] 
\renewcommand{\arraystretch}{1.3}
\caption{Existence of an $\mbox{RDF}(p,k,1)$ with $k=5$, $p<10^3$, and $k=7$ or $9$, $p < 10^4$.}\label{table_RDF}
\begin{center}
\begin{tabular}{|ccccccccc|}
  \hline
   & & & & $k=5$ & & & &\\
  \hline
   41 & 61 & 241 & 281 & 401 & 421 & 601 & 641 & 661\\
   701 & 761 & 821 & 881 &  &  &  &  & \\
   \hline
   \hline
   & & & & $k=7$ & & & & \\
  \hline
   337  & 421  &  463 & 883 & 1723 & 3067 & 3319 & 3823 & 3907 \\
   4621 & 4957 & 5167 & 5419 & 5881 & 6133 & 8233 & 8527 & 8821 \\
   9619 & 9787 & 9829 & & & & & & \\
   \hline
   \hline
   & & & & $k=9$ & & & & \\
  \hline
   73  & 1153  & 1873  & 2017 & 6481 & 7489 & 7561 &  &  \\
   \hline
\end{tabular}
\end{center}
We remark that further parameters are given in~\cite{Bur95radical} for $\mbox{RDF}(p,k,1)$s with $k=7$ or $9$ and $10^4 \leq p < 10^5$.
\end{table}

\begin{example}
Values of $p$ for which an $\mbox{RDF}(p,k,1)$ exists with $k=5$, $p<10^3$, and $k=7$ or $9$, $p < 10^4$ are displayed in Table~\ref{table_RDF} (cf.~\cite{crc06}). For example, if we take an $\mbox{RDF}(41,5,1)$, then we obtain a 
Steiner \mbox{$2$-$(206,\{5,6\},1)$} design. If we take an $\mbox{RDF}(61,5,1)$, then we obtain a Steiner \mbox{$2$-$(306,\{5,6\},1)$} design.
\end{example}


\subsection{RBIBD-Constructions}

We establish the following result:

\begin{theorem}\label{thm_RBIBD-Constrctions}
Let $v$ be a positive integer. Then there exists a Steiner \linebreak \mbox{$2$-$(v+1,\{k,k+1\},1)$} design for the following cases:

\begin{enumerate}

\item[\em(1)] $(k,v)= (3,6a+3)$ for any positive integer $a$,

\item[\em(2)] $(k,v)= (4,12a+4)$ for any positive integer $a$,

\item[\em(3)] $(k,v)= (5,20a+5)$ for any positive integer $a$ with the possible exceptions given in Table~\ref{table_RBIBD},

\item[\em(4)] $(k,v)= (8,56a+8)$ for any positive integer $a$ with the possible exceptions given in Table~\ref{table_RBIBD}.

\end{enumerate}

\end{theorem}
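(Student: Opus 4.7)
The plan is to apply Proposition~\ref{Ton} in its second form: whenever an $\mbox{RBIBD}(v,k,1)$ exists, one immediately obtains a Steiner \mbox{$2$-$(v+1,\{k,k+1\},1)$} design by adjoining a new point to every block of a fixed resolution class, iterating over all resolution classes. So the problem reduces to citing the existence of resolvable $2$-designs with block sizes $3,4,5,8$ and the prescribed orders.

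For case (1), $v=6a+3$, this is exactly the admissible spectrum of Kirkman triple systems, whose existence for all $v\equiv 3\pmod 6$ was settled by Ray-Chaudhuri and Wilson. For case (2), $v=12a+4$, these are the resolvable Steiner systems with block size $4$; Hanani's theorem guarantees their existence for all $v\equiv 4\pmod{12}$. For cases (3) and (4), with $k=5$ and $k=8$, I would quote the standard asymptotic and explicit existence results for $\mbox{RBIBD}(v,k,1)$ (e.g., from the Handbook of Combinatorial Designs~\cite{crc06}), which cover all $v\equiv k\pmod{k(k-1)}$ apart from a small list of undecided or unresolved parameters; these exceptions are then tabulated in Table~\ref{table_RBIBD} and carried through the conclusion.

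The key steps, in order, are: (a) recall Proposition~\ref{Ton} and note that only the RBIBD hypothesis needs to be verified, since the necessary divisibility conditions $v\equiv 0\pmod k$ and $v-1\equiv 0\pmod{k-1}$ are already encoded in the specified residues of $v$; (b) for each of the four block sizes, invoke the relevant existence theorem; (c) collect, for $k=5$ and $k=8$, the finite list of possibly unresolved orders into Table~\ref{table_RBIBD} and state the result with these exceptions.

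The hard part, such as it is, is not combinatorial but bibliographic: one must correctly identify, for $k=5$ and $k=8$, the precise set of orders for which an $\mbox{RBIBD}(v,k,1)$ is not yet known to exist, and ensure consistency with the table. No new construction is required — all the combinatorial work is absorbed into Proposition~\ref{Ton} and into the cited existence theorems for Kirkman-type resolvable designs.
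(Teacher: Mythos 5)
Your proposal follows essentially the same route as the paper: reduce everything to Proposition~\ref{Ton} and then cite the known existence results for an $\mbox{RBIBD}(v,k,1)$ with $k=3,4$ (all admissible $v$, i.e., Kirkman triple systems and Hanani's resolvable block-size-$4$ designs) and with $k=5,8$ (all admissible $v$ up to the possible exceptions collected in Table~\ref{table_RBIBD}). One small correction to your description of the mechanism: in Proposition~\ref{Ton} the new point is adjoined to the blocks of exactly \emph{one} parallel class (that class being the Steiner $1$-$(v,k,1)$ subdesign), not iteratively to all resolution classes --- extending every class by the same new point would cover each pair through that point $r$ times rather than once.
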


\begin{proof}
The constructions are based on the existence of an $\mbox{RBIBD}(v,k,1)$ in conjunction with Proposition~\ref{Ton}.
The following infinite series of resolvable balanced incomplete block designs are known (cf.~\cite{GA97,crc06} and the references therein):
\begin{enumerate}

\item[(i)] When $k=3$ and $4$, respectively, an $\mbox{RBIBD}(v,k,1)$ exists for all positive integers $v \equiv k$ (mod $k(k-1)$).

\item[(ii)] An $\mbox{RBIBD}(v,5,1)$ exists for all positive integers $v \equiv 5$ (mod $20$) with the possible exceptions given in Table~\ref{table_RBIBD}.

\item[(iii)] An $\mbox{RBIBD}(v,8,1)$ exists for all positive integers $v \equiv 8$ (mod $56$) with the possible exceptions given in Table~\ref{table_RBIBD}.

\end{enumerate}
This proves the theorem.\qed
\end{proof}

We remark that Case $(1)$ has already been covered in Theorem~\ref{Ton2}.

\begin{example}Choosing for example an $\mbox{RBIBD}(65,5,1)$, we get a Steiner \linebreak  \mbox{$2$-$(66,\{5,6\},1)$} design. 
If we choose an $\mbox{RBIBD}(105,5,1)$, then we obtain a Steiner \mbox{$2$-$(106,\{5,6\},1)$} design.
\end{example}

\begin{table}
\renewcommand{\arraystretch}{1.3}
\caption{Possible exceptions: An $\mbox{RBIBD}(v,k,1)$ with $k=5$ or $8$ is not known to exist for the following values of \mbox{$v \equiv k$ (mod $k(k-1)$)}.}\label{table_RBIBD}
\begin{center}
\begin{tabular}{|cccccccc|}
  \hline
   & & & $k=5$ & & & &\\
  \hline
   45 & 345 & 465 & 645 & & & &\\
   \hline
   \hline
   & & &  $k=8$ & & & & \\
  \hline
   176  & 624  & 736  & 1128 & 1240 & 1296 & 1408 & 1464\\
   1520 &1576  & 1744 & 2136 & 2416 & 2640 & 2920 & 2976\\
   3256 & 3312 & 3424 & 3760 & 3872 & 4264 & 4432 & 5216\\
   5720 & 5776 & 6224 & 6280 & 6448 & 6896 & 6952 & 7008\\
   7456 & 7512 & 7792 & 7848 & 8016 & 9752 & 10200 & 10704\\
   10760 & 10928 & 11040 & 11152 & 11376 & 11656 & 11712 & 11824\\
   11936 & 12216 & 12328 & 12496 & 12552 & 12720 & 12832 & 12888\\
   13000 & 13280 & 13616  & 13840 & 13896 & 14008 & 14176 & 14232\\
   21904 & 24480 &  &  & & & & \\
   \hline
\end{tabular}
\end{center}
\end{table}

\begin{theorem}\label{thm_RBIBD-constructions_prime}
If $v$ and $k$ are both powers of the same prime, then a Steiner \mbox{$2$-$(v+1,\{k,k+1\},1)$} 
exists if and only if $(v-1) \equiv 0$ $($\emph{mod} $(k-1))$ and $v \equiv 0$ $($\emph{mod} $k)$.
\end{theorem}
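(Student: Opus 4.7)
Since $v$ and $k$ are powers of the same prime, write $v=p^m$ and $k=p^n$. The starting observation is that the two divisibility conditions jointly reduce to the single requirement $n\mid m$: indeed $k\mid v$ says $n\le m$, whereas $(k-1)\mid(v-1)$, i.e.\ $(p^n-1)\mid(p^m-1)$, is the standard number-theoretic fact $n\mid m$, which already implies $n\le m$. So the theorem is really the claim that a Steiner $2$-$(v+1,\{k,k+1\},1)$ design exists precisely when $n\mid m$.

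For the \emph{if} direction the plan is to exhibit an $\mathrm{RBIBD}(v,k,1)$ and invoke Proposition~\ref{Ton}. Setting $d=m/n$, the affine geometry $\mathrm{AG}(d,p^n)$ does the job: its point set $\mathbb{F}_{p^n}^d$ has $(p^n)^d=v$ elements, its blocks are the affine lines (each of size $p^n=k$), any two distinct points lie on a unique line, and the cosets of a fixed one-dimensional $\mathbb{F}_{p^n}$-subspace form a parallel class, giving a resolution. Proposition~\ref{Ton} then appends a fresh point to each block of a chosen parallel class and retains the remaining blocks, producing the desired Steiner $2$-$(v+1,\{k,k+1\},1)$ design.

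For the \emph{only if} direction I would first extract $v\equiv 0\pmod k$ from the pair-counting identity
\[
k(k-1)N_k+k(k+1)N_{k+1}=v(v+1),
\]
where $N_k,N_{k+1}$ denote the numbers of blocks of the two sizes. The identity forces $k\mid v(v+1)$, and since $\gcd(v,v+1)=1$ and $k=p^n$ is a prime power of $p$ while $p\nmid v+1=p^m+1$, the only possibility is $k\mid v$. Next, for each point $x$ let $a_x$ (resp.\ $b_x$) count the blocks of size $k$ (resp.\ $k+1$) through $x$; counting the remaining $v$ points through these blocks yields $a_x(k-1)+b_xk=v$, which rewrites as $r_x(k-1)+b_x=v$ with $r_x=a_x+b_x$. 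Hence every $b_x$ is congruent to $v$ modulo $k-1$. The plan is to combine this uniform local congruence with the global identity $(k+1)N_{k+1}=\sum_x b_x$ and the prime-power structure to force $v\equiv 1\pmod{k-1}$, i.e.\ $n\mid m$.

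The main obstacle is precisely that last step: the coarse double-counting above produces only one linear relation between $N_k$ and $N_{k+1}$, so the sharper conclusion $(k-1)\mid(v-1)$ has to be squeezed out either by turning the local congruence $b_x\equiv v\pmod{k-1}$ into a non-trivial integrality obstruction on $N_{k+1}$ (using that $(p^n-1)\mid(v-1)$ is equivalent to $p^{m-n}\equiv 1\pmod{p^n-1}$), or by exhibiting a point with $b_x=1$, at which the derived parallel class together with the residual blocks of size $k$ realises an $\mathrm{RBIBD}(v,k,1)$ and the standard necessary conditions for that RBIBD then deliver the required arithmetic.
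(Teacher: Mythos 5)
Your \emph{if} direction is complete and is essentially the paper's own route: the paper cites \cite{Greig06} for the fact that the necessary conditions for an $\mbox{RBIBD}(v,k,1)$ are sufficient when $v$ and $k$ are powers of the same prime, and then applies Proposition~\ref{Ton}; your explicit realisation of that RBIBD as the points and lines of $\mathrm{AG}(m/n,p^n)$ is exactly what the citation amounts to for $\lambda=1$ (and shows that for $\lambda=1$ this case is already contained in the first part of Theorem~\ref{Ton2}).

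The gap you flag in the \emph{only if} direction is genuine, and it cannot be closed, because that implication is false for pairwise balanced designs with block sizes $\{k,k+1\}$. Concretely, take $k=4$ and $v=32$, so that $v$ and $k$ are both powers of $2$ and $k\mid v$, but $k-1=3$ does not divide $v-1=31$. An $\mbox{RBIBD}(28,4,1)$ exists since $28\equiv 4\ (\mathrm{mod}\ 12)$ (cf.\ the proof of Theorem~\ref{thm_RBIBD-Constrctions}); it has nine parallel classes of seven blocks each. Adjoin five new points $\infty_1,\dots,\infty_5$, adding $\infty_i$ to every block of the $i$-th parallel class for $i=1,\dots,5$, and include the additional block $\{\infty_1,\dots,\infty_5\}$. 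Every pair of old points is covered once by the original design, every pair $\{x,\infty_i\}$ once because the $i$-th class partitions the old points, and every pair of new points once by the extra block; the result is a Steiner $2$-$(33,\{4,5\},1)$ design with $35\cdot 10+28\cdot 6+10=528=\binom{33}{2}$ covered pairs. (More generally, since $\gcd(k-1,k)=1$, Wilson's asymptotic theory imposes no congruence condition modulo $k-1$ on the number of points of a $\{k,k+1\}$-PBD.) Note that the paper's own proof establishes only sufficiency, so the ``only if'' half should be read as the routine necessity of the conditions for the RBIBD used in the construction, not as a property of arbitrary Steiner $2$-$(v+1,\{k,k+1\},1)$ designs. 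Your deductions $k\mid v$ and $b_x\equiv v\ (\mathrm{mod}\ k-1)$ are correct as far as they go, but the integrality obstruction you were hoping to extract from them does not exist.
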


\begin{proof}
It has been shown in~\cite{Greig06} that, for $v$ and $k$  both powers of the same prime, the necessary conditions for the existence of an $\mbox{RBIBD}(v,k,\lambda)$ are sufficient. Hence, the result follows via Proposition~\ref{Ton} when considering an $\mbox{RBIBD}(v,k,1)$.\,$\Box$
\end{proof}


\subsection{3-Design-Constructions}

Based on further results on 2-resolvable Steiner quadruple systems described in Teirlinck~\cite{Teir94}, we obtain this way also two infinite classes for $t > 2$. 

\begin{theorem}\label{thm_3-designs-sconstructions}

There exists

\begin{enumerate}

\item[$\bullet$] a Steiner \mbox{$3$-$(2 \cdot {31}^e+3,\{4,5\},1)$} design for any positive integer $e$,

\item[$\bullet$] a Steiner \mbox{$3$-$(2 \cdot {127}^e+3,\{4,5\},1)$} design for any positive integer $e$.

\end{enumerate}

\end{theorem}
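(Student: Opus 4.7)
The plan is to apply Proposition~\ref{Ton} with $(t,k)=(3,4)$. For this, I need a Steiner $3$-$(v,4,1)$ design --- that is, a Steiner quadruple system (SQS) --- that contains a Steiner $2$-$(v,4,1)$ subdesign. By the definition recalled in Section~\ref{tools}, a 2-resolvable SQS has its block-set partitioned into disjoint Steiner $2$-$(v,4,1)$ designs, so any single resolution component provides the required subdesign $\mathcal{D}^\prime$. Starting from a 2-resolvable SQS on $v$ points, Proposition~\ref{Ton} then directly yields a Steiner $3$-$(v+1,\{4,5\},1)$ design with the block-size pattern demanded by the theorem.

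To obtain the two announced series, one therefore has to exhibit 2-resolvable SQS on $v = 2\cdot 31^{e}+2$ and $v = 2\cdot 127^{e}+2$ points, respectively. These are precisely the orders delivered by Teirlinck's recursive construction in~\cite{Teir94}, which is tailored to Mersenne primes $p = 2^{m}-1$: since $31 = 2^{5}-1$ and $127 = 2^{7}-1$ are Mersenne primes, the same scheme already invoked in the proof of Theorem~\ref{Ton3} for $p = 7 = 2^{3}-1$ applies. Concretely, I would first extract from~\cite{Teir94} the base case, namely a 2-resolvable SQS on $2p+2$ points for each $p \in \{31,127\}$, and then iterate Teirlinck's recursion to lift from exponent $e$ to exponent $e+1$, producing the 2-resolvable SQS on $2\cdot p^{e}+2$ points for every positive integer $e$.

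The main obstacle, as I see it, is not the final one-line application of Proposition~\ref{Ton} but rather the bookkeeping required to check that Teirlinck's number-theoretic hypotheses impose no additional restriction beyond $p$ being a Mersenne prime of the form handled in~\cite{Teir94}. Once this is verified for $p = 31$ and $p = 127$, the Steiner $3$-$(2\cdot p^{e}+3,\{4,5\},1)$ design follows immediately by the argument of the first paragraph, giving the two infinite families of the theorem.
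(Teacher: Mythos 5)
Your proposal is correct and is essentially the paper's own argument: the paper justifies this theorem in one line by combining Teirlinck's 2-resolvable Steiner quadruple systems on $2\cdot 31^{e}+2$ and $2\cdot 127^{e}+2$ points with Proposition~\ref{Ton}, exactly as you describe (one part of the 2-resolution serving as the Steiner $2$-$(v,4,1)$ subdesign). The only caveat is that your attribution of Teirlinck's hypotheses to the Mersenne-prime form of $31$ and $127$ is a guess about~\cite{Teir94} that would need to be checked against that paper, but it does not affect the logic of the construction.
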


\section{Conclusion}\label{Concl}

Efficient two-stage group testing algorithms that are particular suited for DNA library screening have been investigated in this paper. The main focus has been on novel combinatorial constructions in order to minimize the number of individual tests at the second stage of a two-stage disjunctive testing algorithm. Several infinite classes of such combinatorial structures have been obtained.

\clearpage
\addtocmark[2]{Author Index} 
\renewcommand{\indexname}{Author Index}
\clearpage
\addtocmark[2]{Subject Index} 
\markboth{Subject Index}{Subject Index}
\renewcommand{\indexname}{Subject Index}
\end{document}